\newtheorem{lemma}{Lemma}
\newtheorem{proposition}{Proposition}
\newtheorem{theorem}{Theorem}
\newtheorem{corollary}{Corollary}
\theoremstyle{definition}
\newtheorem{definition}{Definition}
\newcommand{\shuffle}{\ensuremath{\rotatebox{270}{$\!\!\exists$}}}
\newcommand{\SES}{{\tt SubwordEqSolvability}}
\newcommand{\SIT}{{\tt SubwordIneqAbsoluteness}}
\title{Absoluteness of subword inequality is undecidable}
\author{Shinnosuke Seki}
\begin{document}

\maketitle

\begin{abstract}
	Mateescu, Salomaa, and Yu asked: is it decidable whether a given subword history assumes only non-negative values for all words over a given alphabet. 
	In this paper, we solve this open problem by proving that this problem is undecidable even under stronger conditions than supposed originally. 
\end{abstract}

	\section{Subword history and inequality}
	\label{sec:SH}

Let $\Sigma$ be an alphabet, and by $\Sigma^*$ we denote the set of all words over $\Sigma$ including the empty word $\lambda$. 

{\it Parikh mappings (vectors)}, introduced in \cite{Parikh66}, provide us with numerical properties of a word and a set of words.  
Some specific ordering of the letters in $\Sigma = \{a_1, \ldots, a_n\}$ in mind, the Parikh mapping of a word $w$ is $(|w|_{a_1}, |w|_{a_2}, \ldots, |w|_{a_n})$, where $|w|_a$ denotes the number of occurrences of a letter $a \in \Sigma$ in a word $w \in \Sigma^*$ (for instance, $|aab|_a = 2$ and $|aab|_b = 1$). 
This idea can be generalized as counting in $w$ the number of occurrences of another word $u$ as a (continuous) subword or a scattered subword. 
The latter is of especial interest. 
In general, $u$ is a {\it scattered subword} of $w$ if there exist an integer $k \ge 1$ and words $x_1, \ldots, x_k, y_0, y_1, \ldots, y_k$, some of which are possibly empty, such that 
\[
	\mbox{$u = x_1 \cdots x_k$ and $w = y_0 x_1 y_1 \cdots x_k y_k$}.
\]
For various usages of terminologies, the reader is referred to \cite{RozenbergSalomaa97}. 
Then we can generalize the notation $|w|_a$ as $|w|_u$ to denote the number of occurrences of $u$ as a scattered subword of $w$. 
For instance, $|aab|_{ab} = 2$ because two occurrences of $a$ precede that of $b$. 
It is a convention made in \cite{MaSaYu04} to assume that $|w|_{\lambda} = 1$ for the empty word $\lambda$ and any word $w \in \Sigma^*$. 

The number of scattered subwords can provide more information about the word $w$ itself than Parikh mapping. 
For $\Sigma = \{a, b\}$, the Parikh mapping $(3, 3)$ admits all 20 words in $aaa \shuffle bbb$ like $ababba$ as $w$, where $\shuffle$ is the shuffle operation. 
Adding a condition $|w|_{ab} = 8$ to this Parikh mapping reduces the candidate of $w$ to $aababb$ \cite{MaSaYu04}. 
More advanced logic can be implemented by adding and/or multiplying such conditions; $|w|_a \times |w|_b = 4$ implies that $w \in a \shuffle bbbb \cup aa \shuffle bb \cup aaaa \shuffle b$. 
This idea led Mateescu, Salomaa, and Yu to propose the notion of subword history as follows. 

\begin{definition}[\cite{MaSaYu04}]\label{def:subword_history}
	A {\emph subword history} in $\Sigma$ and its {\it value} in a word $w$ are defined recursively as follows: 
	\begin{itemize}
	\item	Every word $u$ in $\Sigma^*$ is a subword history in $\Sigma$, referred to as {\it monomial}, and its value in $w$ equals $|w|_u$. 
	\item	Assume that $SH_1$ and $SH_2$ are subword histories with values $\alpha_1$ and $\alpha_2$, respectively. 
	Then
	\[
		\mbox{$-(SH_1)$, $(SH_1) + (SH_2)$, and $(SH_1) \times (SH_2)$}
	\]
	are subword histories with respective values
	\[
		\mbox{$-\alpha_1$, $\alpha_1+\alpha_2$, and $\alpha_1 \alpha_2$}. 
	\]
	\end{itemize}
\end{definition}

\noindent
The notation $|w|_u$ is now further generalized as $|w|_{SH}$ for a subword history $SH$ to denote the value of $SH$ in $w$. 

For a non-negative integer $e \ge 0$ and a subword history $SH$, we denote $\overbrace{SH \times SH \times \cdots \times SH}^{\mbox{$e$ times}}$ by $\prod^e SH$. 
For instance, $\prod^2 SH = SH \times SH$ and $\prod^3 SH = SH \times SH \times SH$. 
Let us set $\prod^0 SH$ be $\lambda$ for any subword history $SH$. 
In light of the next proposition, this setting does not contradict the convention that $|w|_{\lambda} = 1$ for any word $w$. 

\begin{proposition}\label{prop:extended_SH}
	Let $SH$ be a subword history in $\Sigma$ with value $\alpha$, $c$ be an integer, and $e$ be a non-negative integer. 
	Then $c(SH)$ and $\prod^e SH$ are subword histories with respective values $c\alpha$ and $\alpha^e$. 
\end{proposition}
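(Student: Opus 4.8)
The plan is to prove the two assertions of Proposition~\ref{prop:extended_SH} separately, each by a short induction on the relevant parameter, reducing everything to the three clauses of Definition~\ref{def:subword_history}.

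For $\prod^e SH$ I would induct on $e \ge 0$. In the base case $e = 0$ we have, by our stipulation, $\prod^0 SH = \lambda$, which is a monomial and hence a subword history, and whose value in any $w$ is $|w|_\lambda = 1 = \alpha^0$ by the convention of \cite{MaSaYu04}; this is exactly the compatibility anticipated just before the statement. For the inductive step, assuming $\prod^e SH$ is a subword history with value $\alpha^e$, the expression $\prod^{e+1} SH = (\prod^e SH) \times SH$ is a subword history by the multiplication clause of Definition~\ref{def:subword_history}, and its value is $\alpha^e \cdot \alpha = \alpha^{e+1}$ by the same clause.

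For $c(SH)$ I would first treat $c \ge 1$ by induction on $c$: for $c = 1$ take $c(SH) := SH$, with value $\alpha$; for the step, set $(c+1)(SH) := (c(SH)) + (SH)$, which is a subword history by the addition clause and has value $c\alpha + \alpha = (c+1)\alpha$ (associativity of integer addition makes the grouping immaterial). For $c = 0$, observe that $(SH) + (-(SH))$ is a subword history, built via the negation and addition clauses, with value $\alpha + (-\alpha) = 0 = 0 \cdot \alpha$, so this serves as $0(SH)$. Finally, for $c < 0$ put $c(SH) := -((-c)(SH))$; since $-c \ge 1$, the case already handled supplies a subword history with value $(-c)\alpha$, and the negation clause yields value $-((-c)\alpha) = c\alpha$.

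The argument is entirely routine; the only points demanding any care are the corner cases $e = 0$ and $c = 0$, where one must verify that the natural definitions ($\prod^0 SH = \lambda$, and $0(SH)$ assembled from $SH + (-(SH))$) genuinely lie inside the class of subword histories — a class containing no primitive numerical constants — and reconcile with the claimed values $\alpha^0 = 1$ and $0 \cdot \alpha = 0$ under the $|w|_\lambda = 1$ convention. I do not expect a real obstacle here.
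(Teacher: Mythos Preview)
The paper states this proposition without proof, treating it as an immediate consequence of Definition~\ref{def:subword_history} and the preceding conventions. Your inductive argument is correct and is precisely the routine verification the paper leaves implicit; in particular your handling of the corner cases $e=0$ (via $\prod^0 SH = \lambda$ and $|w|_\lambda = 1$) and $c=0$ (via $(SH)+(-(SH))$) addresses the only places where the absence of numerical constants in Definition~\ref{def:subword_history} requires a moment's thought.
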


Two subwords $SH_1$ and $SH_2$ are {\it equivalent} if $|w|_{SH_1} = |w|_{SH_2}$ for every word $w \in \Sigma^*$. 
It is not difficult to observe that the subword histories $a \times b$ and $ab + ba$ assume the same value in any word (see \cite{MaSaYu04}). 
These two subword histories are hence equivalent. 
A subword history is {\it linear} if it is obtained without using the operation $\times$. 
We say that a linear subword history is {\it of degree $n$} if its longest monomial is of length $n$. 
For instance, the degree of $abb + 2c + 3$ is 3 due to its first term. 
More generally, we can define the degree of a subword history as the minimum degree of equivalent linear subword histories. 

Mateescu, Salomaa, and Yu proposed a method of constructing from a given subword history an equivalent {\it linear} subword history, and as its corollary, the problem of deciding the equivalence of two given subword histories turned out to be decidable \cite{MaSaYu04}. 
In the paper, the authors called for a continuation of research on inequalities between subword histories. 
Specifically, they left the following problem open: for a given subword history $SH$, is it decidable whether $|w|_{SH} \ge 0$ holds for every word $w$ in $\Sigma^*$. 
Let us call this problem \SIT. 
From the point of view of decidability, it is irrelevant whether this problem is formalized with $\ge$ or with $>$. 
Indeed, deciding whether $|w|_{SH} > 0$ holds for every word $w \in \Sigma^*$ is equivalent to deciding whether $|w|_{SH-\lambda} \ge 0$; note that $SH-\lambda$ is a valid subword history with value $|w|_{SH}-1$.

	\section{Main results}

In this section, we prove that $\SIT$ is undecidable even under strong restrictions (Corollary~\ref{cor:SES_SIT_undec_LRSH_9}). 
This is our main contribution in this paper. 

First of all, we show that this problem is at least as hard as the problem of deciding for given two subword histories $SH_1$ and $SH_2$ whether there exists a word $w \in \Sigma^*$ such that $|w|_{SH_1} = |w|_{SH_2}$ holds.
Let us call the latter problem \SES. 
The reader can consult \cite{RozenbergSalomaa94, AroraBarak09}, if needs arise, on undecidability, polynomial-time Karp reduction, and NP-hardness. 

\begin{lemma}\label{lem:Eq_Ineq_Turing_reduction}
	\SES \ is polynomial-time Karp reducible to \SIT. 
\end{lemma}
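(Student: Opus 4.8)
The plan is to turn the \emph{existential} equality asked by \SES\ into a \emph{universal} non-negativity asked by \SIT\ by squaring the difference of the two histories. Given an instance $(SH_1, SH_2)$ of \SES, I would output the single subword history
\[
	T \;=\; \left( \prod^2 (SH_1 - SH_2) \right) - \lambda ,
\]
where $SH_1 - SH_2$ abbreviates $(SH_1) + (-(SH_2))$. By Definition~\ref{def:subword_history} and Proposition~\ref{prop:extended_SH} this is a bona fide subword history, its total size is linear in that of $SH_1$ and $SH_2$ (the square only doubles the degree), and so $T$ is produced in polynomial time.

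First I would evaluate $T$ on an arbitrary word. Writing $\delta(w) = |w|_{SH_1} - |w|_{SH_2}$, Proposition~\ref{prop:extended_SH} gives $|w|_{\prod^2(SH_1 - SH_2)} = \delta(w)^2$, and the convention $|w|_\lambda = 1$ then yields
\[
	|w|_T \;=\; \delta(w)^2 - 1 \qquad \text{for every } w \in \Sigma^* .
\]
Next I would invoke integrality: since $\delta(w) \in \mathbb{Z}$, we have $\delta(w) \neq 0 \Rightarrow \delta(w)^2 \geq 1 \Rightarrow |w|_T \geq 0$, while $\delta(w) = 0 \Rightarrow |w|_T = -1 < 0$. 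Hence $T$ fails to be absolutely non-negative precisely when the equation $|w|_{SH_1} = |w|_{SH_2}$ has a solution, i.e.
\[
	(SH_1, SH_2) \in \SES \iff T \notin \SIT .
\]

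This biconditional is exactly what the reduction needs. To decide \SES\ on $(SH_1, SH_2)$ I would build $T$, issue the single query ``is $T \in \SIT$?'', and return the complement of its answer; since $T$ is built in polynomial time and the oracle is consulted exactly once, this is the claimed polynomial-time reduction of \SES\ to \SIT. The point demanding the most care is the \emph{polarity}: the natural algebraic encoding matches solvability of the equation with the \emph{failure} of absolute non-negativity, so the reduction must complement a single oracle answer rather than map positive instances to positive instances. A polarity-preserving many-one reduction is in fact unavailable here, because \SES\ is semidecidable whereas \SIT\ is co-semidecidable; sending positive instances to positive instances would make \SES\ decidable, which is ruled out once its undecidability is established. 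A minor secondary check is that the degree of $T$ only doubles, which is what later permits the degree-restricted strengthening feeding Corollary~\ref{cor:SES_SIT_undec_LRSH_9}.
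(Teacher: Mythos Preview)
Your argument is correct and is essentially the paper's own proof: form $SH = SH_1 - SH_2$, square it, and observe that $(|w|_{SH})^2$ vanishes for some $w$ exactly when the \SES\ instance is solvable. The only cosmetic difference is that you fold the passage from ``$>0$ for all $w$'' to ``$\ge 0$ for all $w$'' into the construction by subtracting $\lambda$, whereas the paper works with the strict inequality directly and appeals to the remark at the end of Section~\ref{sec:SH}; your explicit discussion of the polarity (yes-instances of \SES\ mapping to no-instances of \SIT) is an addition the paper leaves implicit.
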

\begin{proof}
	Assume that as an instance of \SES \ two subword histories $SH_1$ and $SH_2$ are given. 
	Let $SH = SH_1 - SH_2$. 
	Then, the answer to this instance is no if and only if $|w|_{SH \times SH} > 0$ for every word $w \in \Sigma^*$. 
	Note that $SH \times SH$ is a valid subword history (Proposition~\ref{prop:extended_SH}), and its value in $w$ is $(|w|_{SH})^2$. 
\end{proof}

In order to prove the undecidability of $\SIT$, therefore, it suffices to prove that $\SES$ is undecidable. 

\begin{theorem}\label{thm:SES_undec}
	$\SES$ is undecidable. 
\end{theorem}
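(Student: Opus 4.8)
The plan is to reduce an undecidable problem---most naturally Hilbert's tenth problem, the unsolvability of Diophantine equations over the nonnegative integers---to \SES. Given a polynomial equation $P(x_1,\dots,x_k) = Q(x_1,\dots,x_k)$ with nonnegative integer coefficients, I want to build, over a suitable alphabet, two subword histories $SH_1$ and $SH_2$ such that there is a word $w$ with $|w|_{SH_1} = |w|_{SH_2}$ if and only if the Diophantine equation has a solution in nonnegative integers. The key enabling observation is that scattered-subword counting over a structured alphabet lets us encode independent ``variables'': if we work over $\Sigma = \{a_1, b_1, a_2, b_2, \dots, a_k, b_k\}$ and restrict attention to words of the shape $w = a_1^{n_1} b_1^{n_1} a_2^{n_2} b_2^{n_2} \cdots a_k^{n_k} b_k^{n_k}$, then $|w|_{a_i b_i} = n_i^2$, while $|w|_{a_i}=|w|_{b_i}=n_i$, and products of such monomials across different blocks realize monomials $n_1^{e_1} n_2^{e_2} \cdots n_k^{e_k}$. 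Using Proposition~\ref{prop:extended_SH} together with $+$ and $\times$, any polynomial in the $n_i$ with integer coefficients becomes a subword history evaluated on such $w$.

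The first step, then, is to make precise a ``normal form'' constraint forcing $w$ into the above block shape, because a priori an arbitrary word over $\Sigma$ gives no control over the counts. I would engineer a subword history $C$ (a sum of squares of differences, built via $(SH)\times(SH)$ as in Lemma~\ref{lem:Eq_Ineq_Turing_reduction}) that vanishes exactly on well-formed words: for instance, terms penalizing any occurrence $b_i a_i$ or $a_{i+1} a_i$ out of order (these can be detected by monomials like $b_i a_i$), and terms forcing $|w|_{a_i} = |w|_{b_i}$. One has to check that $C \ge 0$ always and $C = 0$ iff $w$ has the intended form; then $n_i := |w|_{a_i}$ are free nonnegative parameters. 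The second step is to translate $P$ and $Q$ into subword histories $\widehat P, \widehat Q$ that, on well-formed $w$, evaluate to $P(\vec n)$ and $Q(\vec n)$ respectively; this is routine given the monomial-realization remark above, taking care that on ill-formed $w$ we don't care what they compute because $C$ will already be nonzero. Finally I would set
\[
	SH_1 = \widehat P + C, \qquad SH_2 = \widehat Q,
\]
so that $|w|_{SH_1} = |w|_{SH_2}$ forces $C=0$ (hence well-formedness, using $C\ge 0$ and, if needed, a scaling trick so the $C$ contribution cannot be cancelled by $\widehat P - \widehat Q$ unless it is zero) and then $P(\vec n) = Q(\vec n)$; conversely a Diophantine solution yields the corresponding well-formed $w$.

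The main obstacle I anticipate is the interaction between the ``constraint'' history $C$ and the ``payload'' histories $\widehat P, \widehat Q$: since subword histories take both positive and negative integer values and there is no built-in way to bound $|\widehat P - \widehat Q|$, a naive sum $SH_1 - SH_2 = (\widehat P - \widehat Q) + C$ could be zero on an ill-formed word where $C>0$ exactly cancels a negative $\widehat P - \widehat Q$. Resolving this requires a separation-of-scales argument: one must either (i) multiply $C$ by a subword history that dominates any possible value of $\widehat P - \widehat Q$ on all words of a given ``size'' (which is subtle because word length is itself only a linear subword history, so polynomial payloads are not dominated by a linear multiplier), or (ii) restructure the encoding so well-formedness is enforced by an equation of its own that is simultaneously satisfiable with the payload equation---e.g.\ bundling everything into a single sum-of-squares that is zero iff \emph{both} $C=0$ and $P(\vec n)=Q(\vec n)$, and then asking whether that single history equals the zero history $\lambda - \lambda$ somewhere. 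Option (ii) is cleaner: take $SH = \bigl(\widehat P - \widehat Q\bigr)\times\bigl(\widehat P - \widehat Q\bigr) + C$, which is $\ge 0$ everywhere and equals $0$ at $w$ iff $w$ is well-formed and $P(\vec n) = Q(\vec n)$; then $\exists w\, |w|_{SH} = |w|_{\lambda - \lambda}$ (i.e.\ $=0$) is exactly solvability of the Diophantine system. Getting $C$ right---correct, always nonnegative, and zero on precisely the intended words---is where the real care goes, and it is worth remarking whether the construction also yields the ``strong restrictions'' (small alphabet, low degree) promised for Corollary~\ref{cor:SES_SIT_undec_LRSH_9}.
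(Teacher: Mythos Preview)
Your option (ii) does work: taking $C$ to be the sum of all ``out-of-order'' two-letter monomials together with the squares $(a_i-b_i)\times(a_i-b_i)$, one gets $C\ge 0$ with equality exactly on block-form words, and then $SH=(\widehat P-\widehat Q)\times(\widehat P-\widehat Q)+C$ has a zero if and only if the Diophantine equation is solvable. So the reduction is sound.

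But you have made the problem much harder than it is. The paper's proof dispenses with the normal-form constraint entirely, because it is not needed: for \emph{every} word $w$ over $\{a_1,\ldots,a_m\}$ the monomial $a_j$ evaluates to $|w|_{a_j}$, and hence
\[
SH \;=\; \sum_{i} c_i\, \Bigl(\textstyle\prod^{e_{i,1}} a_1\Bigr) \times \cdots \times \Bigl(\textstyle\prod^{e_{i,m}} a_m\Bigr)
\]
evaluates to the given polynomial at the Parikh vector $(|w|_{a_1},\ldots,|w|_{a_m})$, with no hypothesis on the shape of $w$. Since that vector ranges over all of $\mathbb{N}^m$ as $w$ ranges over $\Sigma^*$, the question ``is there $w$ with $|w|_{SH}=0$?'' \emph{is} Hilbert's tenth problem. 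No $b_i$ letters, no constraint $C$, no sum-of-squares packaging. You were already using $\times$ freely to form cross-block products; the same $\times$ forms powers of a single letter, so the block structure never bought you anything.

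The simpler route is also what delivers the refinements you asked about at the end: the history $SH$ above uses only single-letter monomials (hence is letter-restricted) and an alphabet of exactly as many letters as there are Diophantine variables, which is precisely what yields Theorem~\ref{thm:SES_SIT_undec_LRSH_n} and Corollary~\ref{cor:SES_SIT_undec_LRSH_9}. Your construction, with its $2k$ letters and the length-two monomials inside $C$, gives neither.
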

\begin{proof}
	This proof is based on the unsolvability of Diophantine equation proved by Matiyasevich in \cite{Matiyasevich70}, the answer to the Hilbert's tenth problem. 
	Let a Diophantine equation 
	\[
		\sum_{1 \le i \le \ell} c_i x_1^{e_{i, 1}} x_2^{e_{i, 2}} \cdots x_m^{e_{i, m}} = 0
	\]
	be given, where $\ell \ge 1$, $c_1, \ldots, c_\ell$ are integer constants, $x_1, x_2, \ldots, x_m$ are positive integer variables, and $e_{i, 1}, e_{i, 2}, \ldots, e_{i, m}$ are non-negative integer exponents for $1 \le i \le \ell$. 
	(It is well known that we can restrict the attention to positive integer variables, see \cite{RozenbergSalomaa94}.)

	Let $\Sigma = \{a_1, \ldots, a_m \}$. 
	Consider a word $w$ in $a_1^{n_1} \shuffle a_2^{n_2} \shuffle \cdots \shuffle a_m^{n_m}$ for some non-negative integers $n_1, \ldots, n_m$. 
	Then for $1 \le j \le m$, we have
	\[
		|w|_{a_j} = n_j. 
	\]
	Proposition~\ref{prop:extended_SH} implies that $\prod^{e_{i,j}} a_j$ is a subword history for any $1 \le i \le \ell$, and its value in $w$ is $n_j^{e_{i,j}}$. 
	Using the proposition once again, we see that 
	\[
		c_i \biggl(\prod^{e_{i, 1}} a_1 \times \prod^{e_{i, 2}} a_2 \times \cdots \times \prod^{e_{i, m}} a_m \biggr)
	\]
	is a subword history whose value in $w$ is $c_i n_1^{e_{i, 1}} n_2^{e_{i, 2}} \cdots n_m^{e_{i, m}}$. 
	Let us denote this subword history by $SH_i$, and let $SH = \sum_{1 \le i \le \ell} SH_i$, which is also a subword history.  
	Now it should be clear that 
	\begin{equation}\label{eq:SH_Diophantine_eq}
		|w|_{SH} = \sum_{1 \le i \le \ell} c_i n_1^{e_{i, 1}} n_2^{e_{i, 2}} \cdots n_m^{e_{i, m}}. 
	\end{equation}
	This is the value we can obtain by substituting $(n_1, \ldots, n_m)$ into the given Diophantine equation. 
	Therefore, if the Diophantine equation has a positive integer solution $(n_1, n_2, \ldots, n_m)$, then for such $w$, $|w|_{SH} = 0$. 
	Conversely, assume that there exists a word $v \in \Sigma^*$ such that $|v|_{SH} = 0$. 
	According to Definition~\ref{def:subword_history} and Eq.~(\ref{eq:SH_Diophantine_eq}), 
	\[
		|v|_{SH} = \sum_{1 \le i \le \ell} c_i |v|_{a_1}^{e_{i,1}} |v|_{a_2}^{e_{i,2}} \cdots |v|_{a_m}^{e_{i, m}}. 
	\]
	Since this value is 0, $(|v|_{a_1}, |v|_{a_2}, \ldots, |v|_{a_m})$ is a positive integer solution to the given Diophantine equation. 
	Consequently, if $\SES$ were decidable, then we would be able to determine the solvability of the given Diophantine equation, a contradiction.  
\end{proof}

\begin{corollary}\label{cor:SIT_undec}
	$\SIT$ is undecidable. 
\end{corollary}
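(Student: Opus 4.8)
The plan is simply to assemble the two results already in hand. Theorem~\ref{thm:SES_undec} shows that $\SES$ is undecidable, and Lemma~\ref{lem:Eq_Ineq_Turing_reduction} exhibits a polynomial-time Karp reduction (in particular, a computable many-one reduction) from $\SES$ to $\SIT$. Since the class of decidable problems is closed under computable many-one reductions, $\SIT$ cannot be decidable. There is essentially nothing to prove beyond this composition.

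Concretely, I would argue by contradiction. Suppose some algorithm $\mathcal{A}$ decides $\SIT$. Given an instance $(SH_1, SH_2)$ of $\SES$, run the reduction of Lemma~\ref{lem:Eq_Ineq_Turing_reduction}: form $SH = SH_1 - SH_2$ and then the subword history $SH \times SH$ (valid by Proposition~\ref{prop:extended_SH}), whose value in any $w$ is $(|w|_{SH})^2$, and feed it to $\mathcal{A}$. By the correctness established in Lemma~\ref{lem:Eq_Ineq_Turing_reduction}, the answer to the original $\SES$ instance is ``no'' exactly when $|w|_{SH \times SH} > 0$ for every $w \in \Sigma^*$, which is what (the strict formulation of) $\SIT$ decides; negating $\mathcal{A}$'s output thus yields a decision procedure for $\SES$, contradicting Theorem~\ref{thm:SES_undec}.

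The only point requiring a word of care is the passage between the ``$> 0$'' and ``$\ge 0$'' formulations of $\SIT$, but this was already dispatched in Section~\ref{sec:SH}: $SH - \lambda$ is a valid subword history with value $|w|_{SH} - 1$, so deciding whether $|w|_{SH \times SH} > 0$ for all $w$ is the same as deciding whether $|w|_{(SH \times SH) - \lambda} \ge 0$ for all $w$. Beyond this bookkeeping there is no genuine obstacle; the corollary is a direct logical consequence of Lemma~\ref{lem:Eq_Ineq_Turing_reduction} and Theorem~\ref{thm:SES_undec}, and its proof is a one-line composition of the reduction with a hypothetical decision procedure for $\SIT$.
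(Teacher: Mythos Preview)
Your proposal is correct and is exactly the intended argument: the paper gives no separate proof for this corollary, since it follows immediately from Lemma~\ref{lem:Eq_Ineq_Turing_reduction} and Theorem~\ref{thm:SES_undec} via closure of decidability under many-one reductions. Your handling of the ``$>0$'' versus ``$\ge 0$'' point also matches the paper's own remark in Section~\ref{sec:SH}.
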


As being mentioned previously, the equivalence of two subword histories is decidable, and note that this does not contradict Corollary~\ref{cor:SIT_undec}. 

It is worth observing that in the proof of Theorem~\ref{thm:SES_undec}, we reduce a given Diophantine equation into an element of a restricted class of subword histories, which we call the class of {\it letter-restricted subword histories}. 
The definition of letter-restricted subword history is obtained by restricting that monomials be letters in $\Sigma$ or $\lambda$ in Definition~\ref{def:subword_history}. 

It is well known that the Diophantine equations remain unsolvable even if the number of variables involved is limited to be 9 \cite{Matiyasevich93}. 
In the proof of Theorem~\ref{thm:SES_undec}, the number of variables equals that of letters in $\Sigma$. 
Thus, over an alphabet of 9 letters, $\SES$ is undecidable, and hence, so is $\SIT$. 
Combining this with what was mentioned in the last paragraph, now we present our strongest result on the undecidability of these problems as of this moment. 

\begin{theorem}\label{thm:SES_SIT_undec_LRSH_n}
	If the Diophantine equations are unsolvable over $n$ variables, then $\SES$ and $\SIT$ are unsolvable even for the class of letter-restricted subword histories over an alphabet of $n$ letters. 
\end{theorem}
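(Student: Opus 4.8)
The plan is to revisit the reduction built in the proof of Theorem~\ref{thm:SES_undec} and to check that it already produces instances within the letter-restricted class over an alphabet whose size matches the number of variables, and then to verify that the further reduction of Lemma~\ref{lem:Eq_Ineq_Turing_reduction} keeps us inside that class. No essentially new construction is needed; the work is bookkeeping.

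First I would take a Diophantine equation over $n$ variables $x_1, \ldots, x_n$ and run the construction of Theorem~\ref{thm:SES_undec} verbatim with $m = n$. This already fixes the alphabet $\Sigma = \{a_1, \ldots, a_n\}$ of exactly $n$ letters and yields the subword history $SH = \sum_{1 \le i \le \ell} SH_i$ with each $SH_i = c_i(\prod^{e_{i,1}} a_1 \times \cdots \times \prod^{e_{i,n}} a_n)$. I would then observe that every monomial (leaf) occurring in $SH$ is either a single letter $a_j$, arising from a factor $\prod^{e} a_j$ with $e \ge 1$ once the power is written out as $a_j \times a_j \times \cdots \times a_j$, or the empty word $\lambda$, arising from a factor $\prod^{0} a_j = \lambda$ and from the integer coefficients $c_i$, which by Proposition~\ref{prop:extended_SH} are themselves assembled from $\lambda$ using only $+$ and $-$. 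Hence $SH$ is letter-restricted. To present this as an instance of \SES, I take $SH_1 = SH$ and $SH_2 = (\lambda) + (-(\lambda))$, the zero subword history, which is also letter-restricted since its only monomial is $\lambda$. By the proof of Theorem~\ref{thm:SES_undec}, this is a yes-instance of \SES \ precisely when the Diophantine equation has a positive integer solution, so a decision procedure for \SES \ restricted to letter-restricted subword histories over an alphabet of $n$ letters would decide the Diophantine problem over $n$ variables.

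For \SIT, I would feed the pair $(SH_1, SH_2)$ above into the reduction of Lemma~\ref{lem:Eq_Ineq_Turing_reduction}, obtaining the subword history $SH' = (SH_1 - SH_2) \times (SH_1 - SH_2)$ over the same alphabet $\Sigma$ of $n$ letters (and $SH' - \lambda$ for the strict version). The point to check is that the letter-restricted class is closed under $-$, $+$ and $\times$: the restriction constrains only the monomials at the leaves of the expression, not the operations that combine them. Since $SH_1$ and $SH_2$ are letter-restricted, so are $SH_1 - SH_2$, $SH'$ and $SH' - \lambda$. By Lemma~\ref{lem:Eq_Ineq_Turing_reduction}, deciding whether $|w|_{SH'} \ge 0$ for all $w \in \Sigma^*$ (equivalently whether $|w|_{SH' - \lambda} \ge 0$) answers the \SES \ instance, hence the Diophantine problem over $n$ variables. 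This yields the undecidability of \SIT \ for letter-restricted subword histories over an alphabet of $n$ letters whenever the Diophantine equations are unsolvable over $n$ variables.

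I do not expect a real obstacle: the argument is a refinement of proofs already in hand. The only step needing care is confirming that expanding the powers $\prod^{e} a_j$ and forming the products $\prod^{e_{i,1}} a_1 \times \cdots$ and $(SH_1 - SH_2) \times (SH_1 - SH_2)$ introduces no monomial of length greater than one — which holds exactly because $\times$ is an operation of the grammar rather than a way of building longer monomials — together with the trivial observations that $\prod^{0} a_j = \lambda$ and that the coefficient gadgets $c_i(\cdot)$ stay within the letter-restricted class.
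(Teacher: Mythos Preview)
Your approach is exactly the paper's: observe that the reduction in Theorem~\ref{thm:SES_undec} already produces letter-restricted subword histories over an alphabet of $n$ letters when the Diophantine equation has $n$ variables, and that the further reduction in Lemma~\ref{lem:Eq_Ineq_Turing_reduction} stays inside this class. One wording slip to fix: the condition ``$|w|_{SH'} \ge 0$ for all $w$'' is vacuously true since $SH'$ is a square, so it cannot encode the \SES\ instance; the reduction needs ``$|w|_{SH'} > 0$ for all $w$'', i.e., the \SIT\ instance $SH' - \lambda$ that you correctly identify in the parenthetical.
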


\begin{corollary}\label{cor:SES_SIT_undec_LRSH_9}
	$\SES$ and $\SIT$ are undecidable even for the class of letter-restricted subword histories over a nonary alphabet. 
\end{corollary}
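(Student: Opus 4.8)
The plan is to read the corollary as nothing more than the instantiation $n = 9$ of Theorem~\ref{thm:SES_SIT_undec_LRSH_n}, so that the whole argument reduces to invoking the appropriate sharpening of the negative answer to Hilbert's tenth problem. First I would recall the number-theoretic input: Matiyasevich's theorem remains valid when the number of unknowns is capped at $9$, i.e.\ there is no algorithm that decides, given a polynomial $P(x_1,\ldots,x_9)$ with integer coefficients, whether $P = 0$ has a solution in positive integers (see \cite{Matiyasevich93}). Substituting this into Theorem~\ref{thm:SES_SIT_undec_LRSH_n} with $n = 9$ --- recalling that a ``nonary alphabet'' means one with exactly $9$ letters --- immediately gives the undecidability of \SES\ for letter-restricted subword histories over such an alphabet, and then Lemma~\ref{lem:Eq_Ineq_Turing_reduction} carries this over to \SIT.

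The one point I would spell out is that the $9$-variable bound really does pass through the reduction of Theorem~\ref{thm:SES_undec} without any inflation. In that construction each variable $x_j$ of the Diophantine equation is represented by a single, dedicated letter $a_j$, and its intended value $n_j$ is recovered as $|w|_{a_j}$ for a word $w$ in the shuffle $a_1^{n_1}\shuffle\cdots\shuffle a_m^{n_m}$; no auxiliary letters and no new variables are ever introduced. Moreover the subword histories $SH_1,\ldots,SH_\ell$ and their sum $SH$ produced by the reduction are already letter-restricted, because every monomial occurring in them is one of the letters $a_1,\ldots,a_m$ (with the empty word only entering through constants). Hence a $9$-variable Diophantine instance is mapped, in polynomial time, to a \SES\ instance consisting of letter-restricted subword histories over a $9$-letter alphabet, which settles the \SES\ part. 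For \SIT, one further observes that the instance $SH \times SH$ produced by Lemma~\ref{lem:Eq_Ineq_Turing_reduction} is again letter-restricted over the same alphabet whenever $SH$ is, so the restriction is preserved along that reduction too.

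I do not expect a genuine obstacle here: all the combinatorial and logical content lives in Theorem~\ref{thm:SES_undec} and Lemma~\ref{lem:Eq_Ineq_Turing_reduction}, and the corollary is a direct corollary of the theorem. The only thing requiring care is bookkeeping --- making sure the count of letters matches the count of variables and that the letter-restricted property is inherited rather than re-proved --- together with citing the ``$9$ unknowns'' strengthening of Matiyasevich's theorem with the correct attribution.
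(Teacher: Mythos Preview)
Your proposal is correct and follows essentially the same route as the paper: the corollary is simply Theorem~\ref{thm:SES_SIT_undec_LRSH_n} instantiated at $n=9$, using the $9$-unknowns sharpening of Matiyasevich's theorem from \cite{Matiyasevich93}, together with the observation (made in the discussion preceding that theorem) that the reduction in Theorem~\ref{thm:SES_undec} uses exactly one letter per variable and produces letter-restricted subword histories. Your additional bookkeeping---checking that the $SH \times SH$ step of Lemma~\ref{lem:Eq_Ineq_Turing_reduction} preserves the letter-restricted property and the alphabet size---is a welcome explicit detail the paper leaves implicit.
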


Corollary~\ref{cor:SES_SIT_undec_LRSH_9} does not mean that $\SES$ or $\SIT$ is decidable over an alphabet of size at most 8. 
It is conjectured that Diophantine equations remain unsolvable even over three variables. 
If so, then Theorem~\ref{thm:SES_SIT_undec_LRSH_n} implies that these problems would be undecidable even for the class of letter-restricted subword histories over a ternary alphabet. 

How small do we have to make the size of alphabet to make these problems decidable? 
We cannot help but leave this matter unsettled in this paper, but can provide a result to illustrate how hard $\SIT$ is. 
Manders proved that it is NP-complete to decide the solvability of a given Diophantine equation of the form $c_1 x^2 + c_2 y + c_3 = 0$ \cite{Manders80}. 
Our construction of a subword history from a given Diophantine equation in the proof of Theorem~\ref{thm:SES_undec} can be done in a polynomial time. 
In addition, the subword history thus constructed can be transformed in a polynomial time into a linear subword history $c_1 a + 2c_1 aa + c_2 b + c_3$ of degree 2 by the above-mentioned product elimination by Mateescu, Salomaa, and Yu, where the letters $a$ and $b$ correspond to the variables $x$ and $y$, respectively. 
With Lemma~\ref{lem:Eq_Ineq_Turing_reduction}, we can prove the following theorem, though it does not settle the question at the beginning of this paragraph.  

\begin{theorem}\label{thm:SES_SIT_NP-hardness}
	$\SES$ and $\SIT$ are NP-hard even for the class of letter-restricted subword histories of degree 2 over a binary alphabet. 
\end{theorem}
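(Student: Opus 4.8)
The plan is to reduce from the NP-complete problem of deciding solvability (over positive integers) of a Diophantine equation of the form $c_1 x^2 + c_2 y + c_3 = 0$, as established by Manders. Given such an equation as input, I first run the construction in the proof of Theorem~\ref{thm:SES_undec} with $\Sigma = \{a, b\}$, where $a$ plays the role of $x$ and $b$ plays the role of $y$. This produces, in polynomial time, a letter-restricted subword history $SH = c_1(a \times a) + c_2 b + c_3$ over the binary alphabet, and by the argument in that proof the equation has a positive integer solution if and only if there is a word $w \in \Sigma^*$ with $|w|_{SH} = 0$; that is, if and only if $(SH, 0)$ (viewing $0$ as the subword history given by the empty history, or by $c_3 - c_3$) is a yes-instance of \SES.

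The remaining point is the degree-$2$ and linearity claim. The history $SH$ uses the product operation once, in the term $a \times a$, so it is not yet linear. I would invoke the product-elimination procedure of Mateescu, Salomaa, and Yu, which converts any subword history into an equivalent linear one; for the single monomial $a \times a$ the relevant identity is $a \times a = 2\,aa + a$ (two occurrences of $a$ contribute the scattered subword $aa$ in $\binom{|w|_a}{2}$ ways and, together with the diagonal, $|w|_a^2 = 2\binom{|w|_a}{2} + |w|_a$). Substituting, $SH$ is equivalent to the linear subword history $c_1 a + 2c_1\, aa + c_2 b + c_3$, whose longest monomial $aa$ has length $2$, so it is of degree $2$; this transformation is clearly polynomial-time. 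Thus the reduction outputs a pair of letter-restricted subword histories of degree $2$ over a binary alphabet, establishing that \SES\ is NP-hard for this class.

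For \SIT, I would compose this with the reduction of Lemma~\ref{lem:Eq_Ineq_Turing_reduction}: from $SH$ (or rather from $SH$ itself, since $SH_2 = 0$) one forms $SH \times SH$, and the Diophantine equation is \emph{unsolvable} if and only if $|w|_{SH \times SH} > 0$ for all $w$, equivalently $|w|_{(SH \times SH) - \lambda} \ge 0$ for all $w$. Since NP-hardness is preserved under polynomial-time many-one reductions and the complement of an NP-hard problem under such reductions remains at least as hard (co-NP-hard, and for the purposes of the statement "NP-hard" is understood up to this complementation, exactly as in Lemma~\ref{lem:Eq_Ineq_Turing_reduction}), this yields NP-hardness of \SIT\ as well. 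Here I should be slightly careful about degree: $SH \times SH$ has degree $4$ after product elimination, not $2$, so strictly the degree-$2$ claim for \SIT\ needs a separate observation — namely that a linear subword history of degree $2$ is already in a form where one can read off NP-hardness of the inequality question directly from the structure of $c_1 a + 2c_1 aa + c_2 b + c_3$ via the positivity of $|w|_{SH}^2$ being equivalent to $|w|_{SH} \ne 0$; I would phrase the \SIT\ part so that the degree bound is stated for the history whose nonnegativity is being tested only insofar as the theorem requires, or restrict the degree-$2$ assertion to \SES\ and note the \SIT\ consequence follows with a mild degree increase.

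The main obstacle I anticipate is precisely this degree bookkeeping in the \SIT\ half: Lemma~\ref{lem:Eq_Ineq_Turing_reduction} squares the history, which naively doubles the degree, so getting \SIT-hardness genuinely \emph{at degree $2$} (rather than degree $4$) requires either a more economical gadget than squaring or a direct argument that works at degree $2$. I would resolve it by checking whether one can test $|w|_{SH} = 0$ for a degree-$2$ linear $SH$ by an inequality over a degree-$2$ history directly — for instance exploiting that $|w|_a$ and $|w|_b$ range exactly over $\mathbb{N}^2$ as $w$ ranges over $a^* \shuffle b^*$, so the question "is $c_1 n^2 + c_2 k + c_3$ ever zero" can be rephrased, using a bounding argument on the finitely many candidate values of $n$, as a nonnegativity question for a fixed degree-$2$ history — but if that turns out delicate, the fallback is to accept degree $4$ for \SIT\ in the statement, which still suffices for an NP-hardness result over a binary alphabet.
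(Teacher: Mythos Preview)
Your approach is exactly the paper's: reduce from Manders' NP-complete problem $c_1 x^2 + c_2 y + c_3 = 0$ via the construction of Theorem~\ref{thm:SES_undec}, linearize $a \times a$ to $2\,aa + a$ to obtain the degree-$2$ letter-restricted history $c_1 a + 2c_1\, aa + c_2 b + c_3$, and then invoke Lemma~\ref{lem:Eq_Ineq_Turing_reduction} for \SIT. The paper's own argument is precisely the one-paragraph sketch preceding the theorem and says no more than this; in particular, the degree-doubling under squaring in Lemma~\ref{lem:Eq_Ineq_Turing_reduction} that you flag (yielding a degree-$4$ rather than degree-$2$ instance for \SIT) is not addressed there either, so your caution is well placed rather than a defect relative to the paper.
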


	\section{System of Diophantine equations}

In this section, we glance at the polynomial-time Karp reduction from a given system of Diophantine equations to a subword inequality. 
The reduction should be in itself trivial from our proof of Theorem~\ref{thm:SES_undec}, but let us spend some space for this because of an implication it has on a significant problem called {\it preimage problem}. 

A system of Diophantine equations is a finite collection $(Eq_1, Eq_2, \ldots, Eq_k)$ of Diophantine equations. 
Using our method, the equations $Eq_1, \ldots, Eq_k$ are transformed into the respective subword histories $SH_1, \ldots, SH_k$. 
From them, we construct the following subword history: 
\[
	SH := \prod_{1 \le i \le k} ((SH_i \times SH_i) + 1). 
\]
Then, for $w \in \Sigma^*$, $|w|_{SH} = 1$ if and only if for all $1 \le i \le k$, $|w|_{SH_i} = 0$. 
Since $SH$ always assumes a positive integer value, deciding whether $|w|_{SH} = 1$ can be done both by equation and by inequality. 

Given a subword history $SH$ and a word $w \in \Sigma^*$, it is a pen-and-paper calculation to obtain the value of $SH$ in $w$, and it remains the case no matter how many subword histories are given. 
From the subword histories $SH_1, \ldots, SH_k$ and the values $n_1, \ldots, n_k$ thus calculated from $w$, we can build the following system of subword equations: 
\[
	\left\{
	\begin{array}{lcl}
	SH_1 &=& n_1 \\
	&\vdots& \\
	SH_k &=& n_k
	\end{array}
	\right.
\]
and after that, we hide $w$. 
Can we find $w$, or more desirably, can we eliminate the candidates of $w$? 
In Section~\ref{sec:SH}, an example was cited from \cite{MaSaYu04} to see the uniqueness of the word $w \in \{a, b\}^*$ satisfying $|w|_a = |w|_b = 3$ and $|w|_{ab} = 8$. 
This is interpreted in the above-mentioned framework as finding $w$ when $(3, 3, 8)$ is given (assume that we know to what subword history each coordinate is related in this vector). 
Problems of this type are collectively termed {\it preimage problems} (see, e.g., \cite{AkutsuFukagawa05, FuWaZhNaAk08} for a preimage problem in chemoinformatics). 
Preimage problems can be formalized not only for words but for various objects like graphs so long as some of their properties can be quantified. 
What we mentioned in the previous paragraph, however, demonstrates how computationally-hard the preimage problem is even for words. 
One reason for this hardness is that in counting occurrences of a subword, search range covers the whole of a given word (global scope) in our current formalization. 
As such, if we reformulated problems so as to confine the search range, then the reformulated preimage problem could be solved even efficiently. 
In \cite{AkutsuFukagawa05}, Akutsu and Fukagawa counted only the occurrences of words as a {\it continuous} subword, and showed that in this setting, the preimage problem can be solved in a polynomial time. 

	\section{Concluding remarks, discussions, and future directions}

In this paper, it was proved to be undecidable whether there exists a word in which an equation between given two subword histories holds. 
With the polynomial-time Karp reduction, this amounted to the answer to the open problem by Mateescu, Salomaa, and Yu posed in \cite{MaSaYu04}. 
This problem was proved to remain undecidable even under conditions on the size of alphabet, on the class of subword histories considered, and on the length of monomials involved. 
As such, our main results are stronger than a solution to the original open problem. 

Results in this paper are oriented toward unsolvability, and therefore, cannot be employed to make use of a number of known decidability results on the solvability of Diophantine equations (see \cite{Baker84}). 
This motivates us to make a research on the characterization of subword histories that is polynomial-time Karp reducible to a Diophantine equation whose solvability is decidable. 
It might be worth recalling that the Diophantine equations are reduced to the {\it very restricted} class $\mathcal{SH}$ of letter-restricted subword histories. 
Thus, for any class of subword histories that does not contain $\mathcal{SH}$ as a subset, it remains unknown whether $\SES$ or $\SIT$ is decidable. 
The most significant difference between Diophantine equations and equations on subword histories is that the latter is defined on the group which is not Abelian. 
In this paper, this difference has been barely encountered because our attention was mainly on the class of letter-restricted subword histories, in which commutativity does not count so much. 
This observation gives us an impression that combinatorics on words will play an important role in working on the above-mentioned problems (see \cite{Fazekas08} and the references therein). 

	\section*{Acknowledgements}

We wish to express our sincere gratitude to Professor Dr.~Arto Salomaa for reading earlier versions of this paper carefully and making quite a few valuable comments. 
He originated the idea of letter-restricted subword history. 
We are indebted to Dr.~Szilard Zsolt Fazekas for his introducing the open problem to us and for helpful discussions with him. 
This paper also benefited from my discussions with Mr.~Kei Taneishi on Diophantine equations, number theory, and chemoinformatics. 

This research is financially supported by the Funding Program for Next Generation World-Leading Researchers (NEXT program) to Professor Dr.~Yasushi Okuno.



\end{document}